\newtheorem{theorem}{Theorem}
\newtheorem{lemma}[theorem]{Lemma}
\newtheorem*{remark}{Remark}
\newtheorem*{example}{Example}
\def\RR{\mathbb{R}}
\def\CC{\mathbb{C}}
\def\bff{\mathbf{f}}
\def\bfh{\mathbf{h}}
\def\bfu{\mathbf{u}}
\def\bfv{\mathbf{v}}
\def\bfw{\mathbf{w}}
\def\mcd{\mathcal{S}_d}
\begin{document}

\title{\bf Some non-standard ways to generate SIC-POVMs in dimensions 2 and 3}

\author{Gary McConnell\\\it
Controlled Quantum Dynamics Theory Group\\\it
Imperial College London\\\rm
\texttt{g.mcconnell@imperial.ac.uk}}



\date{\today}


\maketitle


\bf The notion of Symmetric Informationally Complete Positive Operator-Valued Measures (SIC-POVMs) arose in physics as a kind of optimal measurement basis for quantum systems~\cite{zauner, renes}. However the question of the existence of such systems is identical to the question of the existence of a maximal set of~\emph{complex equiangular lines}. That is to say, given a complex Hilbert space of dimension~$\mathbf d$, what is the maximal number of (complex) lines one can find which all make a common (real) angle with one another, in the sense that the inner products between unit vectors spanning those lines all have a common absolute value? A maximal set would consist of~$\mathbf d^2$ lines all with a common angle, the absolute value of whose cosine is equal to~$\mathbf{\frac{1}{\sqrt{d+1}}}$. The same question has also been posed in the real case and some partial answers are known: see~\cite[A002853]{sloan} for the known results; and for some of the theory see~\cite[chapter 11]{godsil}. But at the time of writing no unifying theoretical result has been found in the real or the complex case: some sporadic low-dimensional numerical constructions have been converted into algebraic solutions but beyond this very little is known. It is conjectured~\cite{renes, zauner} that such maximal structures always arise as orbits of certain fiducial vectors under the action of the Weyl (or generalised Pauli) group. In this paper we point out some new construction methods in the lowest dimensions ($\mathbf{ d=2}$ and $\mathbf {d=3}$). We should mention that the SIC-POVMs so constructed are all unitarily equivalent to previously known SIC-POVMs. \rm

\section*{SIC-POVMs and Complex Equiangular Lines}

Let~$d>1$ be a positive integer and let~$\CC^d$ denote complex Hilbert space of dimension~$d$ equipped with the usual Hermitian positive-definite inner product, denoted by~$\langle\ ,\ \rangle$. A \emph{complex line} is a 1-dimensional complex subspace of~$\CC^d$. We shall view such a line as being spanned by a unit vector~$\bfu$ which is unique up to a~\emph{phase} (an element of the complex unit circle). As in the real case (where the phase ambiguity however only extends to~$\pm 1$) we may ask about the relative \emph{angle} between two such complex lines. Although the definition of such angles is open to several interpretations~\cite{scharnhorst} we shall adopt the usual convention here and define the angle~$\alpha_{\bfu,\bfv}$ between two lines spanned by unit vectors $\bfu,\bfv$ to be the inverse cosine of the absolute value of their Hermitian inner product $\langle\bfu,\bfv\rangle$, viz:
$$
\alpha_{\bfu,\bfv} = \arccos\left(\vert\langle\bfu,\bfv\rangle\vert\right).
$$
Notice that this definition is unchanged if we multiply $\bfu$ or $\bfv$ or both by (possibly distinct) phases. We follow Scharnhorst~\cite{scharnhorst} in referring to $\alpha_{\bfu,\bfv}$ as the \emph{Hermitian angle} between the vectors~$\bfu$ and~$\bfv$. In~\cite{renes} it is shown that the generating set $\mcd$ of unit vectors for a complete (maximal) set of equiangular lines in $\CC^d$ will necessarily have cardinality $d^2$ and each pair of distinct vectors $\bfu,\bfv$ will satisfy
\begin{equation}\label{SICcond}
\vert\langle\bfu,\bfv\rangle\vert = \frac{1}{\sqrt{(d+1)}}.
\end{equation}

We shall speak about SIC-POVMs and complete sets of equiangular lines as though they were the same object: the translation from one perspective to another may be found in~\cite{renes}. Also where it will cause no confusion we shall not distinguish between row and column vectors, to avoid cluttering up the exposition with transpose symbols. To illustrate the basic idea we shall look at the simplest non-trivial real Euclidean example.

\begin{example}
Let $d=2$ and consider $\RR^2$ equipped with the usual inner (dot) product. Then the three unit vectors $(1,0)$,  $(\frac{1}{2},\frac{\sqrt{3}}{2})$ and $(-\frac{1}{2},\frac{\sqrt{3}}{2})$ span three one-dimensional subspaces which constitute a (maximal) set of 3 equiangular lines in~$\RR^2$, with the mutual angle between them being $\arccos(\frac{1}{2})=\frac{\pi}{3}$.
\end{example}

In~\cite{renes} we find the first systematic numerical search for SIC-POVMs in low dimensions, with the smaller dimensional examples being converted into complete algebraic solutions. This was followed by~\cite{appleby05},~\cite{scottgrassl} and~\cite{appleby12} (the literature is in fact much broader: for a much more extensive set of references see~\cite{appleby12}). The framework in which all of this previous work has been completed is that of the action of the standard $d$-dimensional (Heisenberg-)Weyl Group $W_d$ upon a single \emph{fiducial vector} $\bff_d$: the orbit (modulo phases) of $\bff_d$ under the action of $W_d$ is then the entire SIC-POVM. Hence the focus has been upon finding such fiducial vectors $\bff_d$ since the basis for the expression of the $X_d$ and $Z_d$ matrices which generate $W_d$ is assumed fixed, hence the numerics can focus on just one vector in each dimension.

The focus of this paper is somewhat different: we explore some other ways in which such structures can arise in dimensions 2 and 3. The original idea behind these constructions was to try to find a way of generating all of the elements of a SIC-POVM from a single matrix, by somehow creating a (not necessarily unitary) matrix which takes a simple vector like $\bfv_0=(1,0,\ldots,0)$ and then successively `twists' it to new vectors which have the appropriate angle to all of the previous ones. As we shall see below, this was possible for $d=2$ but is too ambitious for higher dimensions, even for $d=3$. So instead we built SIC-POVMs starting with $\bfv_0$ and building up in a sequence via simple geometric steps, based on the single-matrix dimension~2 example, which give the appropriate angles as we go along. Once again, this works in dimensions~2 and~3 but so far we have not been able to generalise the method to higher dimensions. However it points to a possible new heuristic for achieving such constructions in the general case.

Thanks to Marcus Appleby for valuable discussions and for his comments on an earlier draft of this paper. I would also like to thank Terry Rudolph for many helpful ideas and I am grateful for his group's hospitality at Imperial College, where this work was done.

\section*{$d=2$: an almost-cyclic construction}

\begin{theorem}\label{quasar}
There is a~$2\times2$ complex matrix~$M$ whose first four powers applied to a fiducial vector generate a SIC-POVM in~$d=2$. 
\end{theorem}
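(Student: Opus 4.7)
The plan is constructive. I would exhibit an explicit $2 \times 2$ complex matrix~$M$ together with a fiducial vector~$\bfv_0$, compute the four iterates $\bfw_k = M^k \bfv_0$ for $k = 1, 2, 3, 4$, and verify directly that after normalisation these satisfy the SIC condition~(\ref{SICcond}). Following the motivation indicated earlier in the paper, I would take $\bfv_0 = (1, 0)^{\mathsf T}$, so that the orbit is produced simply by iterating the action of~$M$ on the first column of its previous power.

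To locate a candidate~$M$, I would parametrise its four entries and translate the equiangularity requirement into a polynomial system: for each pair $\{j, k\} \subset \{1, 2, 3, 4\}$ the equation $|\langle \bfw_j, \bfw_k\rangle|^2 = \tfrac{1}{3}\,\|\bfw_j\|^2\,\|\bfw_k\|^2$ gives one real constraint, for six in total on the eight real parameters of~$M$. A preliminary observation rules out unitary~$M$: such an~$M$ would induce a rotation of the Bloch sphere cyclically permuting four vertices of a regular tetrahedron, but the tetrahedral rotation group~$A_4$ contains no $4$-cycle on its vertices. One must therefore look among non-unitary matrices.

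To narrow the search further, I would impose the natural \emph{almost-cyclic} closure condition suggested by the section title: demand that $M^5 \bfv_0$ be proportional to $M \bfv_0$, so that the four projective lines spanned by the~$\bfw_k$ genuinely cycle under the projective action of~$M$. This single polynomial constraint restricts attention to a small family of candidates, within which the remaining SIC equations can be solved. I expect the main obstacle to be the algebraic bookkeeping of this system, but the very rigid geometry of the tetrahedral SIC in~$d = 2$ ought to force a clean closed-form solution.

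Once a suitable~$M$ is in hand, the theorem follows by direct computation: evaluate each~$\bfw_k$ and the six Hermitian inner products between the normalised iterates, and confirm that every absolute value equals~$1/\sqrt{3}$, as required by~(\ref{SICcond}).
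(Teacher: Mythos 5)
Your outline has the right overall shape (fix $\bfv_0=(1,0)$, parametrise $M$, impose the equiangularity equations, solve), which is indeed how the paper proceeds, but as written it is a search plan rather than a proof: no explicit $M$ is produced and none of the six inner-product conditions is actually verified, so the existence claim is not yet established. More seriously, the device you introduce to make the search tractable --- the closure condition $M^5\bfv_0\propto M\bfv_0$ --- is provably incompatible with the conclusion, so the family you propose to search is empty. Indeed, if the four lines spanned by $M\bfv_0,\dots,M^4\bfv_0$ are distinct (as they must be for a SIC-POVM) and $M\bfv_0$ is an eigenvector of $M^4$, then applying $M$ repeatedly shows all four vectors are eigenvectors of $M^4$; a $2\times2$ matrix with four distinct eigendirections is a scalar, so $M^4\propto\mathbf{I}_2$ and $M$ acts on $\CC P^1$ as a M\"obius transformation of order $4$. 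Every orbit of such a transformation is conjugate to $\{z_0,iz_0,-z_0,-iz_0\}$ and hence lies on a circle of the Riemann sphere, i.e.\ on a planar section of the Bloch sphere --- but the four vertices of a regular tetrahedron are not coplanar. The construction in the paper is ``almost'' cyclic precisely because it evades this obstruction: its $M$ has eigenvalues of unequal modulus, $M^4\bfv_0$ is \emph{orthogonal} to $\bfv_0$ rather than proportional to any earlier iterate, and the orbit subsequently diverges. So you must drop the closure condition and instead solve the equiangularity equations for the leading columns of $M$ directly (the paper normalises two of the vectors to be real via a Bloch-sphere rotation, writes $M=\frac{1}{\sqrt3}\bigl(\begin{smallmatrix}1&re^{i\rho}\\ \sqrt2&se^{i\sigma}\end{smallmatrix}\bigr)$, and finds $r=\tfrac{1}{\sqrt2},\rho=\tfrac{\pi}{3},s=2,\sigma=\tfrac{4\pi}{3}$, after which the fourth vector happens to fall into place).

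Two smaller points. First, your argument that $M$ cannot be unitary (``$A_4$ contains no $4$-cycle'') tacitly assumes $M$ permutes the four tetrahedron vertices, which is exactly the closure hypothesis you should be discarding; without it a unitary $M$ need only carry $\bfw_1\mapsto\bfw_2\mapsto\bfw_3\mapsto\bfw_4$ with $\bfw_4$ landing anywhere. The conclusion is still true --- the unique orthogonal map of $\RR^3$ sending three vertices of a regular tetrahedron to three others in the required order has determinant $-1$ --- but it needs that argument, not the $A_4$ one. Second, note that the paper's SIC-POVM is $\{\bfv_0,M\bfv_0,M^2\bfv_0,M^3\bfv_0\}$ (the fiducial is included), whereas you take the iterates $M^k\bfv_0$ for $k=1,\dots,4$; this is harmless up to relabelling the fiducial, but be consistent about which four vectors you are checking.
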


\begin{proof}[Proof (by construction)]
Let $\bfv_0 = (1,0)\in\CC^2$.
If we start with $\bfv_0$ as the first vector of a SIC-POVM it follows from~(\ref{SICcond}) that up to appropriate phases, the remaining~3~vectors (in dimension~$d=2$) must be of the form~$(\frac{1}{\sqrt{3}},\sqrt{\frac{2}{3}}e^{i\theta_j})$, for some angles~$\theta_j\in[0,2\pi),\ j=1,2,3$. So if we postulate the existence of a~$2\times2$ matrix~$M$ which begins with~$\bfv_0$ and cycles us around to three more vectors~$\bfv_1=M\bfv_0$, $\bfv_2=M^2\bfv_0$, $\bfv_3=M^3\bfv_0$ then the first column of~$M$ must be of this same form. We do \bf not \rm insist that $M$ be unitary: this would be unnecessarily restrictive given that we are only looking for equiangular \emph{lines}, not necessarily unit vectors. As it turns out the matrix that we end up constructing does in fact generate a sequence of four \emph{unit} vectors, but its eigenvalues are not of modulus one and so subsequent powers give non-unit vectors.

Since any SIC-POVM in dimension~2 may be represented as a tetrahedron of vectors in the Bloch sphere, we may unitarily rotate it so that any chosen pair of its representative vectors lies in the $X,Z$-plane. Hence these two vectors may be viewed as \emph{real} vectors in the sense that their coordinates in the computational basis of $\CC^2$ are real numbers. So we may take the form of $M$ to be:
$$
M = \frac{1}{\sqrt{3}}\left(
\begin{array}{cc}
1&re^{i\rho}\\
\sqrt{2}&se^{i\sigma}\\
\end{array}
\right)
$$
for appropriate non-negative real numbers $r,s,\rho,\sigma$. For any integer~$j$ we shall write $\bfv_j=M^j\bfv_0$. If we write out the equations governing the absolute values of the inner products between the vectors~$\{\bfv_0,\ \bfv_1\}$ and the vector~$\bfv_2$ and try to solve them so that they satisfy equation~(\ref{SICcond}) then we see a neat solution for~$\langle\bfv_0,\bfv_2\rangle$ is $r=\frac{1}{\sqrt{2}}$, $\rho=\frac{\pi}{3}$.
Moving on to~$\langle\bfv_1,\bfv_2\rangle$ then gives us another `obvious' solution as~$s=2$,~$\sigma=\frac{4\pi}{3}$. Somewhat surprisingly, it turns out that this solution for~$\bfv_2$ which was picked only because it was easy to understand, goes on to generate a fourth vector~$\bfv_3$ which has precisely the desired angles with the previous three. So we have a~SIC-POVM
$$\{\bfv_j:j=0,1,2,3\}$$
all generated from the initial vector $\bfv_0$ by successive multiplication by the single matrix
$$
M = \frac{1}{\sqrt{3}}\left(
\begin{array}{cc}
1&\frac{1}{\sqrt{2}}e^{\frac{i\pi}{3}}\\
\sqrt{2}&-2e^{\frac{i\pi}{3}}\\
\end{array}
\right).
$$
For completeness we list the SIC-POVM vectors as
$$
\bfv_0=\begin{pmatrix}1\\0\end{pmatrix},\ \bfv_1=\frac{1}{\sqrt{3}}\begin{pmatrix}1\\\sqrt{2}\end{pmatrix},\ \bfv_2=\frac{i}{\sqrt{3}}\begin{pmatrix}e^{\frac{-i\pi}{3}}\\-\sqrt{2}\end{pmatrix},\ \bfv_3=\frac{1}{\sqrt{3}}\begin{pmatrix}1\\-\sqrt{2}e^{\frac{-i\pi}{3}}\end{pmatrix}.
$$
\end{proof}

So it seems our matrix~$M$ is able to twist~$\bfv_0$ and the next~2 successive vectors~$\bfv_1,\bfv_2$ by exactly the right amount in order to manufacture a SIC-POVM; thereafter (on both sides, ie for positive and negative powers of~$M$) the vectors sacrifice the angle and begin to grow in magnitude. For example~$\bfv_{-1}$ and~$\bfv_4$ each have length~$\sqrt{2}$ and the magnitudes go on to grow symmetrically about the SIC-POVM from there onwards~(see below). It is as though the behaviour is perfectly constrained just while we need it to be, then it shakes off the constraints and spins off to infinity.

The eigenvalues of~$M$ are~$\lambda_{\pm}=-\frac{i}{2}\pm\frac{1}{2}\sqrt{1+2\sqrt{3}i}$, so since they differ in magnitude it follows that the limiting behaviour of~$M^r\bfv_0$ as  $r\rightarrow\pm\infty$ is for the vectors to head towards infinity in magnitude in both directions, with the Hermitian angle between successive vectors $\bfv_j$ and $\bfv_{j+1}$ tending to zero; however with the limiting \emph{pseudo-angle}~\cite[\S2]{scharnhorst}  between successive vectors equal to the argument of the relevant eigenvalue (ie~$\lambda_{-}$ as~$r\rightarrow\infty$ and~$\lambda_{+}$ as~$r\rightarrow-\infty$).

The following image, taken from~\cite{nasa}, may help to visualise the behaviour of this matrix:

\includegraphics[scale=0.58]{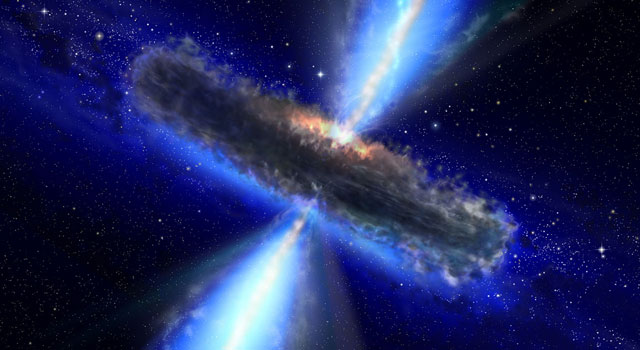}

\noindent where the fiducial vector lies somewhere in the centre, there is a major cluster of vectors of constrained length generated around the centre at the heart of which is the `glowing light' of the particular SIC-POVM configuration, but it gradually (then exponentially) diverges in both directions; the central beam depicts the fact that the powers of~$M$ end up converging to the same vector in a Hermitian angle sense; whereas the widening beam schematically represents the constant non-zero pseudo-angle between successive vectors, which becomes more significant in absolute (Euclidean distance) terms as the vectors grow in magnitude. If we begin at the central point of the series, between~$\bfv_1$ and~$\bfv_2$, then these vectors yield a sequence of integers representing the squared absolute values in both directions as follows:
$$1,\ 1,\ 2,\ 3,\ 5,\ 9,\ 15,\ 26,\ 45,\ 77,\ 133,\ 229,\ 394,\ 679,\ 1169,\ 2013,\ 3467,\ 5970,\ \ldots$$
This sequence does not appear in any of Sloan's online integer sequences~\cite{sloan}.

Another way of visualising the symmetry of this SIC-POVM is to consider what happens if we interpolate the infinite sequence~$\ldots,\ \bfv_0,\ \bfv_1=M\bfv_0,\ \bfv_2=M^2\bfv_0,\ \bfv_3=M^3\bfv_0,\ \ldots$ using any matrix square root of~$M$ (notice the eigenvalues tell us that~$M$ has precisely four (similarity classes of) distinct square roots~\cite[p54]{horn}). Choose any such matrix~$Q$ with~$Q^2=M$. Then the central part (namely the part in which we are most interested) can be indexed instead as
$$\bfv_0=\bfu_{-3/2},\ \bfv_1=\bfu_{-1/2},\ \bfv_2=\bfu_{1/2},\ \bfv_3=\bfu_{3/2},$$
where the subscripts this time refer to half-integral powers of~$M$ as applied to
a central vector~$\bfu_0=Q^3\bfv_0$.

One final curious fact is that the fourth power of~$M$ takes~$\bfv_0$ to the non-unit vector~$\bfv_4=(0,\sqrt{2})$ (which spans the subspace orthogonal to~$\bfv_0$). Let~$\bfu_0=(0,1)$ be a unit vector in the direction of~$\bfv_4$. If we now set~$B=(M^\dagger)^{-1}$ and define~$\bfu_r=B^r\bfu_0$ then the set~$\{\bfu_0,\bfu_1,\bfu_2,\bfu_3\}$ also forms a SIC-POVM which is a kind of `dual' to the above in that for all integers~$j$ by the properties of the inner product,
$$\langle\bfu_j,\bfv_j\rangle=\langle(M^\dagger)^{-j}\bfu_0,M^j\bfv_0\rangle=\langle(M^\dagger)^{-j}(M^\dagger)^j\bfu_0,\bfv_0\rangle=\langle\bfu_0,\bfv_0\rangle=0.$$ 
This is not however the natural dual coming from the adjoint structure - it depends seemingly upon the orthogonality of~$\bfv_0$ and~$\bfv_4$, something which \emph{a priori} is unexpected. If we denote by~$X$ the Pauli~$X$ matrix~$X = \left( \begin{smallmatrix} 0&1\\ 1&0 \end{smallmatrix} \right)$ which is the involution which flips~$\bfv_0$ and~$\bfu_0$, then saying that the~$\bfu_j$ form a SIC-POVM is the same as saying that the matrix~$XM^\dagger X$ also generates a SIC-POVM from the fiducial~$\bfv_0$.



\section*{$d=2,3$: a bi-cyclic structure}

Motivated by the `shape' of the~SIC-POVM constructed in the previous section we began to look for an exact algebraic solution in dimensions~$d=2,3$ starting with a couple of simple assumptions about structure. Such solutions proved relatively straightforward in these low dimensions. In addition for~$d=2$ there is a kind of internal exponential structure to this exact solution, which we shall explain below. However these techniques in their original form cannot be extended to higher dimensions.

\begin{theorem}\label{d2d3}
Let~$d=2$ or~3. There exists a~$d\times d$ unitary matrix~$U_d$ of multiplicative order~$d$ which takes a fiducial vector~$\bfv_0$ to a set of~$d$ vectors~$\bfv_0,\bfv_1,\ldots,\bfv_{d-1}$, each of which represents one of the orbits~$\mathcal{O}_0,\mathcal{O}_1,\ldots,\mathcal{O}_{d-1}$ generated under left multiplication by a fixed~$d\times d$ diagonal unitary matrix~$D_d$ of multiplicative order~$\binom{d+1}{2}$. The disjoint union of these~$d$ orbits is a SIC-POVM.
\end{theorem}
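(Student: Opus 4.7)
The approach is constructive: exhibit explicit matrices $U_d$ and $D_d$ together with a fiducial $\bfv_0$, and then verify by direct calculation that the $d^2$ vectors $\{D_d^k U_d^j \bfv_0\}_{j,k}$ form a SIC-POVM. I begin by analysing the possible orbit sizes. Since $D_d$ is cyclic of order $\binom{d+1}{2}$, each orbit size (as a set of lines) divides this order, and the $d$ orbit sizes must sum to $d^2$. For $d=2$ the only solution is $1+3$, and for $d=3$ the only natural solution is $3+3+3$. The size-$1$ orbit in the first case forces $\bfv_0$ to be an eigenline of $D_2$, while the three size-$3$ orbits in the second case force $D_3^3$ to act as a scalar on each $\bfv_j$, so $D_3^3 = \lambda I$ for some phase $\lambda$. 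Up to reordering and a global phase this pins down $D_d$: take $D_2 = \mathrm{diag}(1, e^{2\pi i/3})$ (order $3$, fixing $e_1$), and $D_3 = \mathrm{diag}(\zeta,\zeta^3,\zeta^5)$ with $\zeta = e^{i\pi/3}$ (matrix order $6$, satisfying $D_3^3 = -I$).

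For $d = 2$ I take $\bfv_0 = (1,0)^T$. Equation~(\ref{SICcond}) forces $\bfv_1 := U_2 \bfv_0$ to have the form $(1/\sqrt 3,\ \sqrt{2/3}\,e^{i\theta})^T$ for some angle $\theta$. A short calculation then shows that for any such $\bfv_1$ one has $|\langle \bfv_0, D_2^k \bfv_1\rangle| = 1/\sqrt 3$ for $k = 0,1,2$ and $|\langle \bfv_1, D_2^k \bfv_1\rangle| = |\tfrac{1}{3} + \tfrac{2}{3}e^{2\pi i k/3}| = 1/\sqrt 3$ for $k = 1,2$, so $\{\bfv_0\} \cup \{D_2^k \bfv_1 : k = 0,1,2\}$ is automatically a SIC-POVM. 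It then suffices to exhibit $U_2$ as a unitary involution with $U_2 \bfv_0 = \bfv_1$; the Hermitian reflection $\bigl(\begin{smallmatrix} 1/\sqrt 3 & \sqrt{2/3}\,e^{-i\theta} \\ \sqrt{2/3}\,e^{i\theta} & -1/\sqrt 3\end{smallmatrix}\bigr)$ is a concrete choice.

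For $d = 3$ the strategy is analogous but with more freedom. After fixing $D_3$ as above, I parametrise $\bfv_0 \in \CC^3$ on the unit sphere and pick an ansatz for $U_3$ among the order-$3$ unitary matrices (a cyclic coordinate shift, a rescaled Fourier matrix, or a Zauner-type matrix are the natural candidates). Using the identity $\langle D_3^a U_3^j \bfv_0,\ D_3^b U_3^k \bfv_0\rangle = \langle U_3^j \bfv_0,\ D_3^{b-a} U_3^k \bfv_0\rangle$, the $\binom{9}{2} = 36$ pairwise equiangularity conditions collapse to a small number of essentially distinct algebraic equations in the parameters of $\bfv_0$ and $U_3$. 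Solving this reduced system produces an explicit pair $(U_3, \bfv_0)$, after which the SIC property is a mechanical finite check.

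The main obstacle is the existence claim for $d = 3$: the reduced system is nonlinear, and it is not a priori clear that unitarity, the order-$3$ condition $U_3^3 = I$, and the equiangularity conditions can all be met simultaneously. The encouraging reason to expect a solution is that the known SIC-POVMs in $d=3$ form a continuous one-parameter family carrying the Zauner order-$3$ symmetry, which should leave enough flexibility to impose the bicyclic structure demanded by the theorem. Once an explicit $(U_3,\bfv_0)$ is produced, the proof is complete by direct computation.
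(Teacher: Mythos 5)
Your $d=2$ argument is complete and correct, and is in fact a little slicker than the paper's (which routes through a lemma characterising when a basis admits a cycling unitary via a circulant Gram matrix); the observation that \emph{any} $\theta$ works and that the Hermitian reflection supplies the order-$2$ unitary is sound. The problem is the $d=3$ half, which is where all the content of the theorem lies. You never exhibit the pair $(U_3,\bfv_0)$: you write down an ansatz, note that the equiangularity conditions ``collapse to a small number of essentially distinct algebraic equations,'' and then assert that ``solving this reduced system produces an explicit pair,'' while simultaneously conceding that ``it is not a priori clear that unitarity, the order-$3$ condition \ldots and the equiangularity conditions can all be met simultaneously.'' For an existence theorem whose only available proof is by construction, that is the entire proof, and it is missing. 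The appeal to the known one-parameter family of $d=3$ SICs with Zauner symmetry is a heuristic for optimism, not an argument.

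There is also a structural misstep in your orbit analysis. You claim the only natural orbit decomposition for $d=3$ is $3+3+3$ and deduce $D_3^3=\lambda I$; but $1+2+6$ also satisfies your own divisibility constraint, and it is the decomposition the paper actually uses, with $D_3=\mathrm{diag}(1,-1,e^{2\pi i/3})$ acting on an \emph{upper-triangular} basis ($\bfv_0$ with one nonzero entry, $\bfv_1$ with two, $\bfv_2$ with three), which is precisely what makes the cross-orbit angle conditions automatic. Worse, your choice $D_3=\mathrm{diag}(\zeta,\zeta^3,\zeta^5)$ with $D_3^3=-I$ forces every nonzero vector to have a $D_3$-orbit of even cardinality, so the ``disjoint union of the three orbits'' contains $18$ vectors representing each of the $9$ lines twice --- not the $d^2$-element generating set the theorem intends --- and projectively your $D_3$ and shift-type $U_3$ just reproduce the standard Weyl--Heisenberg construction that the paper is explicitly departing from. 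To repair the proof you would need either to actually solve your reduced system and address the double-counting, or to follow the $1+2+6$ route: impose a circulant Gram matrix on an upper-triangular basis, solve the single resulting equation $\frac12 e^{ix}=\frac14 e^{-2ix}+\frac{\sqrt3}{2}re^{i(\eta-y)}$ for the phases, and then verify the one remaining family of intra-orbit conditions $\bigl|\tfrac14+(-1)^{t-s}\tfrac14+\tfrac12\zeta^{t-s}\bigr|=\tfrac12$.
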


\begin{proof}
Once again the proof is by construction.
For general~$d$ it is a fact of linear algebra~\cite[theorem~2.3.1]{horn} that given any basis of $\CC^d$ we can find unitaries to change the basis to one in which these~$d$ initial column vectors form an upper-triangular matrix. Now given any SIC-POVM set of~$d^2$ vectors it is always possible to take a subset of~$d$ vectors which forms a basis, and therefore in view of the result just stated we may choose these~$d$ such that following an appropriate unitary transformation the column vectors may be arranged to form an upper-triangular matrix. This observation will allow us to construct SIC-POVMs with a particularly transparent geometric structure, because once we have the triangular basis we multiply our basis vectors by a diagonal matrix whose non-zero entries are phases, to create a series of~$d$ orbits, each of which is determined -- by virtue of the `triangular' and diagonal substructures -- solely by the number of non-zero entries in the vector. So our SIC-POVM is then automatically partitioned into~$d$ orbits under the diagonal matrix~$D_d$ and we cycle between the orbits using a unitary matrix~$U_d$ of order~$d$, which we shall construct below.

For any complex vector or matrix~$N$ we denote its transpose by~$N^T$, its entrywise complex conjugate by~$N^*$ and its conjugate transpose by~$N^\dagger={N^*}^T$. Let $\{\bfv_j\}$ be a basis for $\CC^d$ and let $\{\bfw_k\}$ be its dual basis, so $\bfw_k^\dagger\bfv_j=\langle\bfw_k,\bfv_j\rangle=\delta_{kj}$ for all~$j,k$, where~$\delta_{kj}$ is the Kronecker delta. We would like to find a unitary matrix~$U$ which cycles between these vectors, so that for all~$k$:
$$U\bfv_k = \bfv_{k+1}$$
(where we understand the subscript indices as cycling modulo~$d$). I am grateful to Marcus Appleby for pointing out the following lemma, which shows that this is possible if and only if the Gram matrix~$G_\bfv$ of the chosen basis~$\{\bfv_k\}$ is circulant, that is $\langle\bfv_j,\bfv_k\rangle=\langle\bfv_{j+1},\bfv_{k+1}\rangle$ for all $j,k$.

\begin{lemma}\label{unicirc}
With notation as above, let~$\mathcal{A}$ be a~$d\times d$ complex matrix satisfying the following equivalent conditions:

(i) $\mathcal{A}\bfv_j = \bfv_{j+1}$ for all $j$

(ii) $\mathcal{A} = \sum_{k=0}^{d-1}\bfv_k\otimes\bfw_{k-1}^\dagger$

Then $\mathcal{A}$ is unitary if and only if~$G_\bfv$ is circulant.
\end{lemma}

\begin{proof}
We first need to prove the assertion that~(i) and~(ii) are equivalent. That (ii) implies (i) follows from the definitions; the converse is a consequence of the fact that since $\{\bfv_j\}$ is a basis for the space and $\{\bfw_k^\dagger\}$ is a basis for the dual space, the set~$\{\bfv_j\otimes\bfw_{k}^\dagger\}$ is a basis for the matrix operator space in which~$\mathcal{A}$ lives.

So assume that~$\mathcal{A}$ is the matrix defined in~(ii): we must show that being unitary under the standard Hermitian inner product, in the sense that~$\mathcal{A}^\dagger\mathcal{A}=\mathcal{A}\mathcal{A}^\dagger=\mathbf{I}_d$ where~$\mathbf{I}_d$ is the~$d\times d$ identity matrix, is equivalent to the Gram matrix~$G_\bfv$ being circulant. Writing out the change-of-basis equations and using the definition of the dual basis, we see that
\begin{equation}\label{cobgram}
\bfv_l = \sum_{k=0}^{d-1} (G_\bfv^T)_{lk}\bfw_k.
\end{equation}
Since~$\{\bfv_j\}$ is a basis it follows that~$\mathcal{A}$ is unitary if and only if $\mathcal{A}^\dagger\mathcal{A}\bfv_l=\bfv_l$ for all~$l$, which means:
$$\sum_{k=0}^{d-1}\bfw_{k-1}\otimes\bfv_k^\dagger\sum_{j=0}^{d-1}\bfv_j\otimes\bfw_{j-1}^\dagger\bfv_l=\bfv_l\hbox{\rm\ for\ all\ }l.$$
Now~$\bfw_{j-1}^\dagger\bfv_l=\delta_{j-1,l}=\delta_{j,l+1}$ and so the terms in the inner sum are non-zero only when~$j=l+1$. So by the definition of the Gram matrix~$G_\bfv$ the sum becomes:
$$\sum_{k=0}^{d-1}(G_\bfv)_{k,l+1}\bfw_{k-1}=\bfv_l\hbox{\rm\ for\ all\ }l.$$
Using the index~$k+1$ in place of~$k$ and transposing gives
$$\sum_{k=0}^{d-1}(G_\bfv^T)_{l+1,k+1}\bfw_k=\bfv_l\hbox{\rm\ for\ all\ }l$$
and since~$\{\bfv_j\}$ and~$\{\bfw_k\}$ are bases,~(\ref{cobgram}) shows that each of the above statements is equivalent to
$$(G_\bfv)_{m,n}=(G_\bfv)_{m+1,n+1}$$
for all $m,n$. This completes the proof of the lemma.
\end{proof}

Let us specialise to the case~$d=2$ or~$3$ with our initial vector $\bfv_0$ which is $(1,0)$ for $d=2$ and $(1,0,0)$ for $d=3$. Armed with the above lemma we now search for~$\bfv_1,\ldots,\bfv_{d-1}$ such that the basis $\{\bfv_k\}$ has upper-triangular form and such that the Gram matrix~$G_\bfv$ is circulant. Since it is also automatically Hermitian this reduces considerably the possibilities for the vectors. Henceforth all of the vectors we consider will be assumed to be unit vectors.

\ 

$\mathbf{d=2:}$ If we perform the same trick as in the previous section by identifying any SIC-POVM in dimension~2 with a tetrahedron in the Bloch sphere then we may assume once again that our second vector~$\bfv_1$ is~$\frac{1}{\sqrt{3}}(1,\sqrt{2})$. Notice that this automatically fulfils the circulant criterion, since in dimension~2 it boils down to the single requirement that~$\langle\bfv_0,\bfv_1\rangle=\langle\bfv_1,\bfv_0\rangle$ which by the fact that the inner product is Hermitian forces both to be real. So we may write our candidate for a unitary matrix which cycles between~$\bfv_0$ and~$\bfv_1$ as
$$U_2=\frac{1}{\sqrt{3}}\left(
\begin{array}{cc}
1&\alpha\\
\sqrt{2}&\beta\\
\end{array}
\right)
$$
for some complex numbers~$\alpha,\beta$. If we require that~$U_2$ be unitary and of multiplicative order~2 it follows that in fact~$U_2$ must be Hermitian and so~$\alpha=\sqrt{2}$ and~$\beta=\pm1$. Writing out the equations for~$U_2^2$ we find that the only possibility is:
$$U_2=\frac{1}{\sqrt{3}}\left(
\begin{array}{cc}
1&\sqrt{2}\\
\sqrt{2}&-1\\
\end{array}
\right),
$$
and we may verify that indeed~$U_2\bfv_0=\bfv_1$ and~$U_2\bfv_1=\bfv_0$.
We now look for a diagonal matrix~$D_2$ of phases which will take our initial vectors~$\bfv_0$ and~$\bfv_1$ by left multiplication to~2 more vectors which comprise the remaining part of the generators for a maximal set of equiangular lines. Notice that the upper left-hand entry of~$D_2$ must be~1, since~$\bfv_0$ is always in a~$D_d$-orbit of its own (all other vectors in any SIC-POVM containing~$\bfv_0$ are forced to have their first entry equal to a phase times~$\frac{1}{\sqrt{(d+1)}}$). So our diagonal matrix in this case will look like
$$D_2 = \left(
\begin{array}{cc}
1&0\\
0&\zeta\\
\end{array}
\right)
$$
where~$\zeta$ is some phase. We set~$\bfv_2=D_2\bfv_1$ and~$\bfv_3=D_2\bfv_2=D_2^2\bfv_1$. Writing out the equations for the set~$\{\bfv_0,\bfv_1,\bfv_2,\bfv_3\}$ to form a spanning set for a maximal set of equiangular lines in dimension~2 we observe first that the equiangularity between~$\bfv_0$ and the other three is automatic, by our choice of first entries (see the discussion of~$d=3$ below for a deeper insight into this property, which is the essence of the advantage of this construction method). So we only need worry about the angles among the remaining vectors~$\bfv_1,\bfv_2$ and~$\bfv_3$, which boil down to just three equations of the form
$$\vert1+2\zeta^r\vert=\sqrt{3},$$
where~$r=1$ or 2. This forces~$\zeta$ to be one of the primitive cube roots of unity, and we are done. Notice that the requirement that $D_2^3=\mathbf{I}_2$ would also have forced~$\zeta$ to be one of the cube roots of unity (without necessarily having been a solution which provided a SIC-POVM!). However we did not impose this \emph{a priori} in case a similar situation should arise to that in the first section, where the generating matrix was not of finite order.

\begin{remark}
The way in which the above example and its counterpart below in dimension~3 were originally discovered was by considering `Hadamard' multiplication of rank~1 projectors with the density matrices corresponding to the upper-triangular vector set, since the structure shone through much more clearly there than in any other format; presumably because the phase ambiguities are removed. If we consider that our matrix~$D_2$ is in fact the diagonal matrix of a vector~$\bfh=(1,-e^\frac{\pi i}{3})$ say, and if we form the rank~1 projector from~$\bfh$ which is the Hermitian matrix~$H = \bfh^\dagger\bfh = \left( \begin{smallmatrix} 1&-e^\frac{-\pi i}{3}\\ -e^\frac{\pi i}{3}&1 \end{smallmatrix} \right)$, then the following remarkable fact arises: the set
$$\big\{\bfv_0,\ e^{i\theta_m H}\bfv_0,\ (H\ast e^{i\theta_m H})\bfv_0,\ (H\ast H\ast e^{i\theta_m H})\bfv_0\big\}$$ 
or equivalently
$$\big\{\bfv_0,\ e^{i\theta_m H}\bfv_0,\ (e^{i\theta_m H}\bfv_0)\ast\bfh,\ (e^{i\theta_m H}\bfv_0)\ast\bfh\ast\bfh\big\}$$
is a SIC-POVM, where~$\theta_m$ denotes the so-called~\emph{magic angle}~$\theta_m=\arccos{\frac{1}{\sqrt{3}}}$, and where the~$\ast$ denotes Hadamard (elementwise) multiplication of vectors and/or matrices. What we lose however in this version is the finite order property of the transition unitary~$e^{i\theta_m H}$: while (under Hadamard multiplication) the matrix~$H$ still has finite order, the unitary matrix~$e^{i\theta_m H}$ has infinite multiplicative order. In this context we mention that our original transition matrix~$U_2$ above may be expanded as the exponential
$$U_2 = e^{-i\theta_m Y}Z,$$
where~$Y=\left( \begin{smallmatrix} 0&-i\\ i&0 \end{smallmatrix} \right)$ and~$Z=\left( \begin{smallmatrix} 1&0\\ 0&-1 \end{smallmatrix} \right)$ are the usual Pauli matrices.
\end{remark}

\ 

$\mathbf{d=3:}$  This time our vector~$\bfv_0=(1,0,0)$ and we must find a unitary matrix~$U_3$ which takes us from~$\bfv_0$ cyclically to vectors~$\bfv_1$ and~$\bfv_2$ which have respectively~2 and~3 non-zero entries (the upper-triangular format referred to above). We know by the same argument as in dimension~2 that the first entry of each of these vectors must have absolute value~$\frac{1}{2}$, so let the top entry of~$\bfv_1$ be~$\frac{1}{2}e^{ix}$ for some~$x\in[0,2\pi)$. The hypothesis that the Gram matrix of the set~$\{\bfv_0,\bfv_1,\bfv_2\}$ be circulant in particular forces~$\frac{1}{2}e^{ix}=\langle\bfv_0,\bfv_1\rangle=\langle\bfv_2,\bfv_0\rangle$ and so the top entry of~$\bfv_2$ must equal~$\frac{1}{2}e^{-ix}$. 
So let us write
$$\bfv_1=\begin{pmatrix}\frac{1}{2}e^{ix}\\\frac{\sqrt{3}}{2}e^{iy}\\0\end{pmatrix},\ \bfv_2=\begin{pmatrix}\frac{1}{2}e^{-ix}\\re^{i\eta}\\\sqrt{\frac{3}{4}-r^2}e^{i\kappa}\end{pmatrix}$$
for suitable real non-negative~$y,r,\eta,\kappa$. We remark first that~$\kappa$ may be set to be zero since it has no impact upon any other quantities, including the effect of our target~$D_3$ matrix, as we shall explain below. It remains to ensure that the middle inner product~$\langle\bfv_1,\bfv_2\rangle$ then also equals~$\frac{1}{2}e^{ix}$. (Notice that the other~3 non-diagonal inner products in the Gram matrix are forced to obey the same circulant rule here because the Gram matrix is Hermitian and the dimension is only~3). 
So we only need solve the equation:
\begin{equation}\label{ranga}
\frac{1}{2}e^{ix}=\langle\bfv_1,\bfv_2\rangle=\frac{1}{4}e^{-2ix}+\frac{\sqrt{3}}{2}re^{i(\eta-y)},
\end{equation}
which upon multiplying throughout by~$2e^{2ix}\neq0$ becomes
$$(e^{ix})^3 - \sqrt{3}re^{i(\eta-y)}(e^{ix})^2 - \frac{1}{2} = 0.$$
Viewed as an equation in the variable~$e^{ix}$ and bearing in mind the role of sixth roots of unity in this theory, this equation has a particularly suggestive form: namely if we take the phase~$e^{i(\eta-y)}(e^{ix})^2$ in the central term to be~$\pm i$ then the whole equation has the shape of a sixth root of unity minus its real and imaginary components. That is, if we set~$(e^{ix})^3 = e^{\frac{\pi i}{3}} = \frac{1}{2}+\frac{\sqrt{3}}{2}i$, set~$r=\frac{1}{2}$ and ensure that the phase in the middle term is equal to~$i$, then we have a solution. So one neat form is to set~$\eta=\frac{\pi}{2}$,~$y=\frac{2\pi}{9}$ and so the vectors become:
$$\bfv_0=\begin{pmatrix}1\\0\\0\end{pmatrix},\ \bfv_1=\begin{pmatrix}\frac{1}{2}e^{\frac{\pi i}{9}}\\\frac{\sqrt{3}}{2}e^{\frac{2\pi i}{9}}\\0\end{pmatrix},\ \bfv_2=\begin{pmatrix}\frac{1}{2}e^{-\frac{\pi i}{9}}\\\frac{i}{2}\\\frac{1}{\sqrt{2}}\end{pmatrix}$$
and using the formula in~(ii) of lemma~\ref{unicirc} gives our transition unitary~$U_3$ to be:
$$U_3=\left(
\begin{array}{ccc}
\frac{1}{2}e^{\frac{\pi i}{9}}                  &    -\frac{i}{2}              &        \frac{1}{\sqrt{2}}                    \\
\frac{\sqrt{3}}{2}e^{\frac{2\pi i}{9}}        &    \frac{i}{2\sqrt{3}}  e^{\frac{\pi i}{9}}                       &       - \frac{1}{\sqrt{6}}  e^{\frac{\pi i}{9}}                  \\
0                               &     \sqrt{\frac{2}{3}}e^{\frac{-2\pi i}{9}}              &              - \frac{i}{\sqrt{3}}   e^{\frac{-2\pi i}{9}}                   \\
\end{array}
\right)
$$
which has multiplicative order~3. So we have our substructure of a triangular basis.

It remains to search for a diagonal matrix~$D_3$ of phases such that the (subspaces generated by the) orbits of these vectors under left multiplication by~$D_3$ do in fact constitute a full set of equiangular lines. As in the~$d=2$ case the top left-hand entry of~$D_3$ must be~1. So let us write
$$D_3 = \left(
\begin{array}{ccc}
1&0&0\\
0&\xi&0\\
0&0&\zeta\\
\end{array}
\right)
$$
for some phases~$\xi,\zeta$. We observe that $D_3\bfv_1=\begin{pmatrix}\frac{1}{2}e^{\frac{\pi i}{9}}\\\frac{\sqrt{3}}{2}e^{\frac{2\pi i}{9}}\xi\\0\end{pmatrix}$ and so~$\langle\bfv_1,D_3\bfv_1\rangle=\frac{1}{4}+\frac{3}{4}\xi$. For this to be of absolute value~$\frac{1}{2}$ we require that~$\xi=-1$. Substituting this in turn into the equation for~$\langle D_3\bfv_1,\bfv_2\rangle$ yields an inner product~$\frac{1}{2}e^{-\frac{5\pi i}{9}}$, which is also of the correct absolute value. So far, so good: we have a collection of four vectors which span four equiangular lines. The final step is to check whether there is an appropriate choice of~$\zeta$ to generate the other five. 

Returning for a moment to the case of general~$d$, observe that for any positive integers~$r,s$ and~$j$, since~$D_d$ is by construction unitary:
$$\langle D_d^{r}\bfv_{s+j},\bfv_{s}\rangle = \langle\bfv_{s+j},{D_d^\dagger}^r\bfv_{s}\rangle = \langle\bfv_{s+j},D_d^{n-r}\bfv_{s}\rangle,$$
where~$n$ is the lowest common multiple of the orders of the eigenvalues chosen so far for~$D_d$. In other words, all of the vectors in orbit~$\mathcal{O}_{s+j}$ will have the correct Hermitian angle with all of those in orbit~$\mathcal{O}_{s}$, since by stage~$(s+j)$ we have already verified that~$\bfv_{s+j}$ makes the correct angle with all of orbit~$\mathcal{O}_{s}$ and since~$D_d$ does not affect anything in the vectors of orbit~$\mathcal{O}_{s}$ beyond the $s$-th entry, the same must be true of all of the~$D_d$-multiples of~$\bfv_{s+j}$ no matter what our choice of eigenvalue at the~$(s+j)$-level. 
So the point about the upper-triangular structure we have created may be seen here (for~$d=2$ it was rather trivial): once we have created~$k$ levels in the sense that we have vectors~$\bfv_0,\ldots,\bfv_{k-1}$ and all of their finite orbits~$\mathcal{O}_0,\ldots,\mathcal{O}_{k-1}$ under repeated multiplication by~$D_d$, and once we are sure that the subsequent vectors~$\bfv_{k},\ldots,\bfv_{d-1}$ make the correct Hermitian angle with all of these orbits, then we may choose~\emph{any} phases for the~${k},\ldots,(d-1)$-st eigenvalues of~$D_d$ safe in the knowledge that the images of the vectors~$\bfv_{k},\ldots,\bfv_{d-1}$ under any power of the resulting matrix~$D_d$ will automatically make the correct Hermitian angle with the orbits~$\mathcal{O}_0,\ldots,\mathcal{O}_{k-1}$. So we are reduced at each $k+1$-st stage to ensuring that the set of new vectors~$\{D_d^r\bfv_{k}\}$ has the correct set of mutual angles with one another and with the subsequent vectors~$\bfv_{k+1},\ldots,\bfv_{d-1}$; the previous orbits automatically `fall into line'. This also shows that within each level we only need to check~$\vert\mathcal{O}_k\vert$ equations rather than the usual~$\binom{\vert\mathcal{O}_k\vert}{2}$, since for any integers~$r,s$:
$$\langle D_d^{r}\bfv_k,D_d^s\bfv_k\rangle = \langle\bfv_k,{D_d^\dagger}^rD_d^s\bfv_k\rangle = \langle\bfv_k,D_d^{s-r}\bfv_k\rangle.$$


So in dimension~3 it is a consequence of the above discussion that no matter what our choice of~$\zeta$, the vectors~$D_3^t\bfv_2$ for integer~$t$ will always have the correct angle with vectors~$\bfv_0$,~$\bfv_1$, and~$D_3\bfv_1$. So we only need to focus on the inner products between the vectors~$D_3^t\bfv_2$ for~$t=0,1,2,3,4,5$. A glance at the shape of the vector~$\bfv_2$ shows that for any integers~$s,t$, since $D_3$ is automatically unitary:
$$\langle D_3^s\bfv_2,D_3^t\bfv_2\rangle = \langle\bfv_2,D_3^{t-s}\bfv_2\rangle = \frac{1}{4} + (-1)^{t-s}\frac{1}{4}  + \frac{1}{2}\zeta^{t-s},$$
explicitly showing that the individual vectors are unit vectors when~$s=t$. Without loss of generality we may assume when~$s\neq t$ that~$0\leq s<t\leq5$, so in particular~$1\leq t-s\leq 5$.
The above expression shows immediately that if $t-s$ is odd then we have the correct absolute value of~$\frac{1}{2}$; when~$t-s$ is even (ie equal to~2 or~4) one sees that any primitive cube root or indeed sixth root of unity will once again yield the correct absolute value of~$\frac{1}{2}$. So for simplicity we shall set
$$\zeta = e^\frac{2\pi i}{3},$$
hence~$D_3$ has the form
$$D_3 = \left(
\begin{array}{ccc}
1&0&0\\
0&-1&0\\
0&0&e^\frac{2\pi i}{3}\\
\end{array}
\right),
$$
whence our full set of vectors is:
\begin{eqnarray*}
\mathcal{O}_0 & = & \{ \begin{pmatrix}1\\0\\0\end{pmatrix} \}, \\
\mathcal{O}_1 & = & \{ \begin{pmatrix}\frac{1}{2}e^{\frac{\pi i}{9}}\\\frac{\sqrt{3}}{2}e^{\frac{2\pi i}{9}}\\0\end{pmatrix}, \begin{pmatrix}\frac{1}{2}e^{\frac{\pi i}{9}}\\-\frac{\sqrt{3}}{2}e^{\frac{2\pi i}{9}}\\0\end{pmatrix} \},  \\
\mathcal{O}_2 & = & \{ \begin{pmatrix}\frac{1}{2}e^{\frac{-\pi i}{9}}\\\frac{i}{2}\\\frac{1}{\sqrt{2}}\end{pmatrix}, \begin{pmatrix}\frac{1}{2}e^{\frac{-\pi i}{9}}\\-\frac{i}{2}\\\frac{1}{\sqrt{2}}e^\frac{2\pi i}{3}\end{pmatrix}, \begin{pmatrix}\frac{1}{2}e^{\frac{-\pi i}{9}}\\\frac{i}{2}\\\frac{1}{\sqrt{2}}e^\frac{4\pi i}{3}\end{pmatrix}, \begin{pmatrix}\frac{1}{2}e^{\frac{-\pi i}{9}}\\-\frac{i}{2}\\\frac{1}{\sqrt{2}}\end{pmatrix}, \begin{pmatrix}\frac{1}{2}e^{\frac{-\pi i}{9}}\\\frac{i}{2}\\\frac{1}{\sqrt{2}}e^\frac{2\pi i}{3}\end{pmatrix}, \begin{pmatrix}\frac{1}{2}e^{\frac{-\pi i}{9}}\\-\frac{i}{2}\\\frac{1}{\sqrt{2}}e^\frac{4\pi i}{3}\end{pmatrix} \}. \\
\end{eqnarray*}
Notice we have split it into its three natural $D_3$-orbits:~$\mathcal{O}_0$ generated by~$\bfv_0$,~$\mathcal{O}_1$ generated by~$\bfv_1$ and~$\mathcal{O}_2$ generated by~$\bfv_2$. Also we remark that~$D_3^6=\mathbf{I}_3$, so in fact in this case we are able to stick to finite-order unitaries both for the transition matrix between orbits, and for the diagonal matrix which generates each orbit.

This completes the proof of theorem~\ref{d2d3}.\end{proof}

We should mention that the mere creation of the initial set~$\{\bfv_0,\bfv_1,\bfv_2\}$ does not in any way guarantee that it can be extended to a SIC-POVM in the above fashion. For example it is possible to create a set of three totally real vectors (using~$x=0$ above and then solving equation~(\ref{ranga})) which have no corresponding diagonal matrix to extend them to a full set.

\begin{remark}
Any attempt to extend this methodology beyond~$d=3$ using the na\"ive diagonal approach which worked in~$d=2,3$ is unfortunately doomed to fail - in a sense one `runs out of degrees of freedom' far too quickly. This does not rule out a kind of `block diagonal' approach, which we hope to be the subject of future work.
\end{remark}

\end{document}